\documentclass[conference]{IEEEtran}

\newif\Ifoverviewfigure

\usepackage{cite}
\usepackage{amsmath,amssymb,amsfonts}
\usepackage{graphicx}
\usepackage{textcomp}
\usepackage{xcolor}
\usepackage{amsmath}
\usepackage{amssymb}
\usepackage{amsthm}
\usepackage{mathtools}
\usepackage{graphicx}
\usepackage{listings}
\usepackage{mathabx} 
\usepackage{algorithm} 
\usepackage[noend]{algpseudocode}
\usepackage{framed}
\usepackage{booktabs} 
\usepackage{multirow} 
\usepackage{array} 
\usepackage{xfrac} 
\usepackage{wrapfig}
\usepackage{amssymb}
\usepackage{xspace}
\usepackage{url}
\usepackage{subcaption}
\usepackage{tikz}
\usepackage{arydshln}
\usetikzlibrary{trees}

\theoremstyle{plain}
\newtheorem{theorem}{Theorem}

\newtheorem{definition}{Definition}

\def\BibTeX{{\rm B\kern-.05em{\sc i\kern-.025em b}\kern-.08em
    T\kern-.1667em\lower.7ex\hbox{E}\kern-.125emX}}

\algnewcommand{\LeftComment}[1]{\Statex \(\triangleright\) #1}
\newcommand{\pparagraph}[1]{\medskip\noindent\textbf{#1}\hspace{0.5em}\xspace}
\newcommand{\eps}{\varepsilon}

\renewcommand{\epsilon}{\eps}

\algnewcommand{\IfThenElse}[3]{
  \State \algorithmicif\ #1\ \algorithmicthen\ #2\ \algorithmicelse\ #3}
\algnewcommand{\IfThen}[2]{
  \State \algorithmicif\ #1\ \algorithmicthen\ #2}

\newcommand\copyrighttext{%
  \footnotesize \textcopyright 2024 IEEE. Personal use of this material is permitted.  Permission from IEEE must be obtained for all other uses, in any current or future media, including reprinting/republishing this material for advertising or promotional purposes, creating new collective works, for resale or redistribution to servers or lists, or reuse of any copyrighted component of this work in other works. DOI: \url{https://doi.org/10.1109/ICPM63005.2024.10680684}}
\newcommand\copyrightnotice{%
\begin{tikzpicture}[remember picture,overlay]
\node[anchor=south,yshift=10pt] at (current page.south) {\fbox{\parbox{\dimexpr\textwidth-\fboxsep-\fboxrule\relax}{\copyrighttext}}};
\end{tikzpicture}%
}

\begin{document}

\title{
    Differentially Private Inductive Miner
}

\author{
    \IEEEauthorblockN{Max Schulze\IEEEauthorrefmark{1}}
    \IEEEauthorblockA{\textit{Institute for IT Security} \\
    \textit{Universität zu Lübeck}\\
    max.schulze@student.uni-luebeck.de
    }
    
    \and
    \IEEEauthorblockN{Yorck Zisgen\IEEEauthorrefmark{1}}
    \IEEEauthorblockA{\textit{Chair of Business Informatics and Process Analytics} \\
    \textit{University of Bayreuth}\\
    yorck.zisgen@uni-bayreuth.de
    }
    
    \and
    \IEEEauthorblockN{Moritz Kirschte}
    \IEEEauthorblockA{\textit{Institute for IT Security} \\
    \textit{Universität zu Lübeck}\\
    m.kirschte@uni-luebeck.de
    }
    
    \and
    \IEEEauthorblockN{Esfandiar Mohammadi}
    \IEEEauthorblockA{\textit{Institute for IT Security} \\
    \textit{Universität zu Lübeck}\\
    esfandiar.mohammadi@uni-luebeck.de
    }
    
    \and
    \IEEEauthorblockN{Agnes Koschmider}
    \IEEEauthorblockA{\textit{Chair of Business Informatics and Process Analytics} \\
    \textit{University of Bayreuth, Fraunhofer FIT}\\
    agnes.koschmider@uni-bayreuth.de
    }
}

\maketitle
\copyrightnotice

\begingroup\renewcommand\thefootnote{\IEEEauthorrefmark{1}}
    \footnotetext{These authors contributed equally to this work.}
\endgroup

\begin{abstract}
    Protecting personal data about individuals, such as event traces in process mining, is an inherently difficult task since an event trace leaks information about the path in a process model that an individual has triggered. Yet, prior anonymization methods of event traces like k-anonymity or event log sanitization struggled to protect against such leakage, in particular against adversaries with sufficient background knowledge. In this work, we provide a method that tackles the challenge of summarizing sensitive event traces by learning the underlying process tree in a privacy-preserving manner. We prove via the so-called Differential Privacy (DP) property that from the resulting summaries no useful inference can be drawn about any personal data in an event trace. On the technical side, we introduce a differentially private approximation (DPIM) of the Inductive Miner. Experimentally, we compare our DPIM with the Inductive Miner on 14 real-world event traces by evaluating well-known metrics: fitness, precision, simplicity, and generalization. The experiments show that our DPIM not only protects personal data but also generates faithful process trees that exhibit little utility loss above the Inductive Miner.
\end{abstract}

\begin{IEEEkeywords}
    Process mining, Differential Privacy, Process Discovery, Privacy Utility Trade-off
\end{IEEEkeywords}

\section{Introduction}
    \label{Sec:Introduction}
    Privacy risks in process mining on potentially sensitive event logs impede the valuable extraction of insights from real-world event logs, as extracted process trees can provide significant transparency into business processes. From a privacy protection perspective, however, event logs are particularly challenging: in the worst case every trace is fully associated with only one individual; hence, the footprint of an individual on an event log is very high. A study on re-identification risks in event logs showed that there are significant privacy leakages in the vast majority of the event logs used widely by the process mining community~\cite{DBLP:journals/tmis/ElkoumyFSKMVRW22,ReIDEvent}.
    
    The literature contains proposals for protecting privacy for event logs or in process mining, respectively \cite{elkoumy_differentially_2022, Fahrenkrog_Semantics-Aware,PP-ProcessM, DBLP:conf/bpm/Fahrenkrog-Petersen20,PrivPre-DFG}. Yet, four of these prior approaches \cite{elkoumy_differentially_2022,Fahrenkrog_Semantics-Aware,DBLP:conf/bpm/Fahrenkrog-Petersen20,PrivPre-DFG} do not provide strong provable privacy guarantees for a process mining algorithm against attackers with strong background knowledge as summarized in Sec.~\ref{Sec:Related_Work}. Advancements in privacy-preserving computations have demonstrated that techniques, such as k-anonymity or event log sanitization, falter when an adversary possesses sufficient background knowledge~\cite{NaSh_08:netflix}. Specifically, these methods struggle to provide substantial guarantees against future adversaries. A state-of-the-art privacy notion considering a strong attacker is differential privacy (DP), which requires that the impact of single traces on the final process model is limited; in particular, DP guarantees imply limited impact of outliers and, as a result, significantly mitigate re-identification risks. While there is work on querying traces in a differentially private manner~\cite{PP-ProcessM}, query-based information extraction only works for a limited number of queries to guarantee privacy. With every query, additional information about underlying sensitive traces is leaked. As an alternative, if the mining strategy itself guarantees differential privacy, the resulting process representation could be arbitrarily used, e.g., for generating synthetic event traces, without causing any additional privacy leakage. 

    This paper introduces a novel privacy-preserving process mining algorithm called \texttt{Differentially Private Inductive Miner} (DPIM) which produces a process tree (PST) based on an event log. DPIM replaces privacy-leaking operations of the Inductive Miner on single traces with privacy-compliant operations on sets of traces. These operations are designed such that we show strong differential privacy guarantees while approximating the functionality of the Inductive Miner as shown by the common quality measures (i.e., fitness, precision, simplicity, and generalization). We evaluated our algorithm against the Inductive Miner based on 14 real-world event logs in terms of accuracy and privacy.\footnote{Code, data, and evaluations are available at \url{github.com/Schulze-M/DPIM}} The trade-off between privacy gain and data utility loss depends on the chosen degree of $\varepsilon$ (lower means more privacy), event log complexity (simpler is better), and event log size (larger is better). In Fig.~\ref{fig:eval_fitness} we quantify this trade-off and show that differential privacy can be obtained on real-world event logs with process models that keep a fitness of 0.95, precision of 0.9, simplicity of 0.7, and generalization of 0.8.

    \begin{figure*}[!ht]
    \centering
    \begin{minipage}[b]{0.34\textwidth}
        \includegraphics[width=\linewidth]{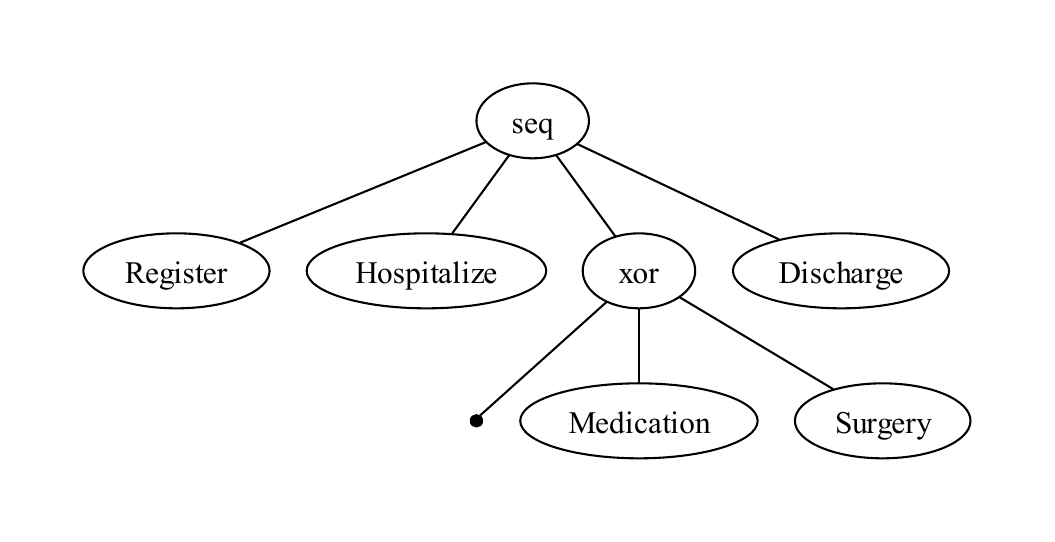}
        \caption{PST on Trace Variants 1 to \emph{3}}
        \label{fig:medical_process}
    \end{minipage}
    \begin{minipage}[b]{0.34\textwidth}
        \includegraphics[width=\linewidth]{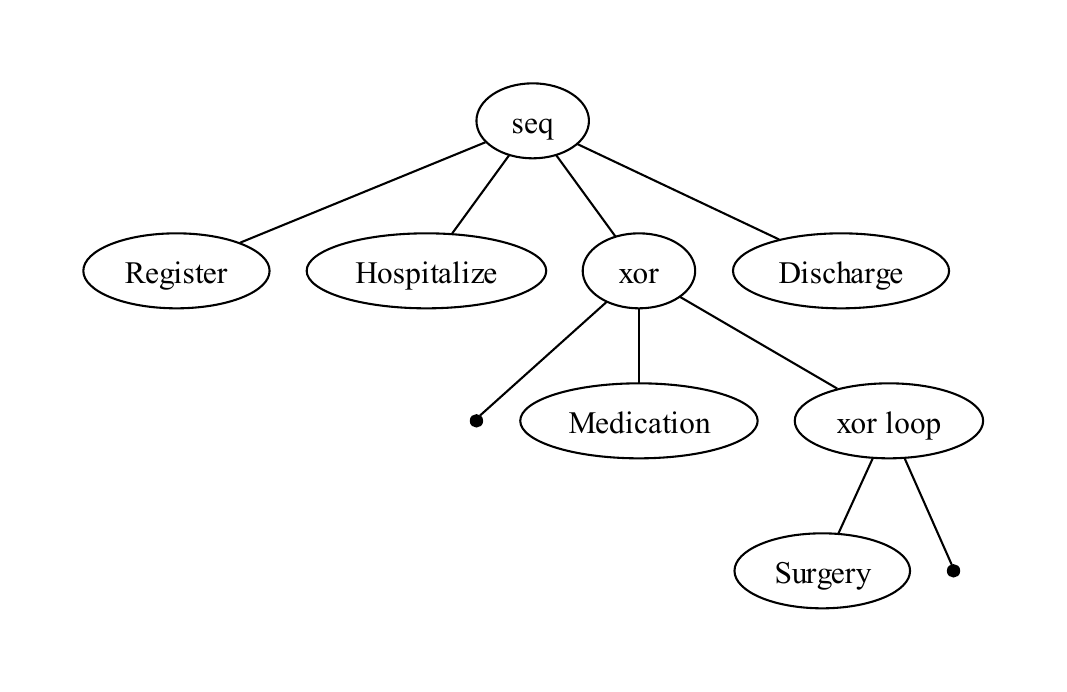}
        \caption{PST on Variants 1 to \emph{4}}
        \label{fig:patient_jane}
    \end{minipage}
    \begin{minipage}[b]{0.3\textwidth}
    \makeatletter
    \renewcommand{\@captype}{table}
    \makeatother
        \begin{tabular}{c|c|l}
        \toprule
            Variant & Count & Trace \\
            \midrule
             \emph{1} & 63x & $\langle$\,R,\,H,\,M,\,D\,$\rangle$ \\
             \emph{2} & 25x & $\langle$\,R,\,H,\,S,\,D\,$\rangle$ \\
             \emph{3} & 12x & $\langle$\,R,\,H,\,D\,$\rangle$ \\
             \hdashline
             \emph{4} & 1x & $\langle$\,R,\,H,\,S,\,S,\,D\,$\rangle$ \\
             \emph{5} & 0x & $\langle$\,R,\,M,\,D\,$\rangle$ \\
             \bottomrule
        \end{tabular}
        \caption{Exemplary Trace Log}
        \label{tab:probstat}
    \end{minipage}
    \end{figure*}

    \pparagraph{Structure.}
    Sec.~\ref{Sec:Problem_Statement} defines the problem on an exemplary use case. Sec.~\ref{Sec:Related_Work} compares our approach to related approaches. Sec.~\ref{Sec:Preliminaries} presents how to incorporate privacy guarantees into the Inductive Miner. Sec.~\ref{Sec:Approach} describes our \emph{DPIM}. Evaluation results are summarized in Sec.~\ref{Sec:Evaluation}, and a conclusion is drawn in Sec.~\ref{Sec:Conclusion}.

\section{Problem Statement}
    \label{Sec:Problem_Statement}
    Process mining algorithms extract insights about an organization's processes from recorded and potentially sensitive event log data. Yet, it carries the inherent risk that what is disclosed may be private. We assume a medical treatment process in a hospital (Fig. \ref{fig:medical_process}) where patients are \emph{registered}, \emph{hospitalized}, and in slight medical cases can be \emph{discharged} immediately, while in more severe medical cases they either receive \emph{medication} or undergo \emph{surgery}, before being \emph{discharged}. Each trace represents a possibly unique path in the treatment process that can be directly linked to the health situation of an individual patient. Thus, anonymizing an event log can not keep the original traces unmodified as the traces pose a re-identification risk otherwise \cite{ReIDEvent}.

    On a process model based on unmodified traces, we can construct the following attack: assume an attacker, e.g. a data controller or business analyst, with authorized access to query aggregated data such as process models, but no direct access to the underlying data. The attacker has the background knowledge to perform a \emph{difference attack} \cite{foundation}, e.g. the attacker queries the process model at a timepoint where the trace variants \emph{1} to \emph{3} in Table \ref{tab:probstat} are included (\textit{R:Register, H:Hospitalize, M:Medication, S:Surgery, D:Discharge}) (cf. Fig. \ref{fig:medical_process}). After learning that Jane visited the hospital, the attacker queries again and finds out that the process model changed as seen in Fig. \ref{fig:patient_jane} and deducts that a trace variant like number \emph{4} in Table \ref{tab:probstat} caused that difference. Thus, the attacker learns that Jane must have undergone surgery at least twice.
    
    To counter privacy attacks without significantly compromising utility, we seek to anonymize traces by summarizing as many common behaviors as possible, e.g. by working on the trace variants or directly-follows relation. Intuitively, the more persons exhibit a common behavior, like having the same trace or directly following activities, the better a single person can blend in the crowd. Thus, the more we summarize, the stronger the privacy becomes. We can accelerate this advantage with strong privacy protection mechanisms like differential privacy (DP), where a high-utility DP mechanism has a privacy protection level $\eps \in \mathcal{O}(\sfrac{1}{n})$ where $n$ is the number of summarized elements.

    Based on summarized traces, we propose to construct a process structure tree (PST) where it is possible to synthesize an event log or build a Petri net. To create a privacy-preserving PST, we propose a method called \emph{DPIM} that is based on the Inductive Miner and uses DP. DP guarantees that the resulting PST is protected against the addition, removal, or change triggered by any person's trace. Thus, even those attackers with unlimited background knowledge are unable to infer the influence of a single trace on the process model and thereby recreate the nature of the trace.

\section{Related Work}
    \label{Sec:Related_Work}
    Closely related to our approach are approaches that focus on differentially private event log sanitization \cite{elkoumy_differentially_2022, Fahrenkrog_Semantics-Aware,PP-ProcessM, DBLP:conf/bpm/Fahrenkrog-Petersen20,PrivPre-DFG}. Yet, two of them \cite{elkoumy_differentially_2022,PrivPre-DFG} admit to privacy limitations which leaves the privacy protection unclear; hence we do not compare our utility to their approach. For the other two papers \cite{Fahrenkrog_Semantics-Aware,DBLP:conf/bpm/Fahrenkrog-Petersen20}, we found a counterexample of a component of their approaches in the form of a privacy attack that violates the differential privacy property. Hence, it is unclear which degree of privacy protection these papers achieve, rendering a direct utility comparison unfair.
    
    Elkoumy et al. \cite{PrivPre-DFG} aims to generate a directly-follows graph (DFG) differentially private by determining the amount of noise needed and then noising the weight of the arcs. However, they state that they do not add or delete arcs. 
    Yet, as per our example in Table.~\ref{tab:probstat} the person with trace variant \emph{4} can alter whether the DFG contains the activity pair $\langle\,S,\,S\,\rangle$. Thus this person can add or remove an arc in a DFG since any activity pair is represented in a DFG. Hence, their protection method is not DP, as an attacker can observe in the DFG whether \emph{4} was present.

    Elkoumy et al. \cite{elkoumy_differentially_2022} anonymizes event log timestamps and removes some traces, assuming limited background knowledge of an attacker. This removal is based on prior knowledge. However, they state that no new trace variants are introduced. Yet, as per our example in Table~\ref{tab:probstat} and Fig.~\ref{fig:patient_jane}, the person with trace variant \emph{5} can alter the process by introducing a new unforeseen trace that is not present in the data, e.g. the activity pair $\langle R, M \rangle$. This indicates that their method does not meet differential privacy standards, as an attacker with extensive background knowledge can detect the presence of trace variant \emph{5} in the anonymized data.

    Fahrenkrog et al. \cite{Fahrenkrog_Semantics-Aware} propose a DP algorithm that releases an anonymized trace variant distribution using the exponential mechanism in combination with Laplace noise. From the trace variant distribution, their algorithm builds a DFR. Although they add noise to selected activity pair frequencies of the DFR, the algorithm preselects a limited number of activity pairs based on the event log, which is data-dependent. This preselection, driven by semantic correctness or a k-follows score, undermines differential privacy. As new trace variants can add a huge number of new plausible activity pairs which put previously distant activities closer and leads to a lower k-follows score, a huge number of new activity pairs are accepted. Thus, an attacker can infer whether the new trace variant is used by how many activity pairs are released.

    The PRIPEL framework\cite{DBLP:conf/bpm/Fahrenkrog-Petersen20} anoynmizes traces using a random timestamp shift and noise added to the count of trace variants. The usefulness of these anonymized traces is improved by selecting only those similar to the original traces via the Levensthein distance. Yet, as per our example in Table~\ref{tab:probstat}, the person with trace variant \emph{4} could influence this selection if its trace is similar to an anonymized one. This suggests that the method does not ensure differential privacy, as an attacker can infer the presence of trace variant \emph{4} from the selection process.
    
    Mannhardt et al. \cite{PP-ProcessM} is similar to our approach as it introduces a method to generate the frequency of all directly-follows relation (DFR) or a prefix-tree differentially private. Our DPIM concentrates on generating a faithful process tree (PST), which includes a parallel and loop cut selection.

    Further related to our work are research areas that focus on privacy measures that do not provide an equally strong guarantee of privacy as DP or are parallel to our work by protecting the mining process cryptographically. K-anonymity methods to protect event logs have been proposed \cite{rafiei_group-based_2021, DBLP:conf/rcis/RafieiWA20}. However, these are not robust against unlimited background knowledge like differential privacy (DP) and are hence vulnerable to intersection attacks \cite{PSO}. Rafiei et al. \cite{rafiei_towards_2021, rafiei_quantifying_2022} suggests disclosure risk quantification measures. These do not provide techniques to guarantee privacy.     Burattin et al. \cite{AnonPM} suggest outsourcing process mining while maintaining the confidentiality of event logs and discovered process models using encryption and cryptography, yet the decrypted model itself is not protected with DP.

\section{Preliminaries}
    \label{Sec:Preliminaries}
    \noindent\textbf{Differential Privacy (DP) \cite{DP}}
        \label{Sec:DifPriv}
        is a mathematical framework that enables the analysis and sharing of sensitive data while preserving the privacy of individual records within the data set. It guarantees privacy by introducing a controlled amount of noise to the data, ensuring that the presence or absence of any individual's data does not significantly influence the results of queries. This privacy guarantee is quantified by a privacy parameter $\epsilon$, with lower values offering stronger privacy protection at the expense of reduced data utility. The \emph{post-processing theorem} \cite{privacybook} states that no further processing of the output of a DP mechanism can increase privacy leakage.

        \begin{definition}[\textbf{Bounded Differential Privacy}]
            \label{def:dp}
            Two event logs $L, L'$ are neighboring (written $L\sim L'$) if they differ in at most one trace. Let $\mathcal{R}$ be the set of random variables over some set $O$.
            A randomized mechanism $\mathcal{M}: \mathcal{L} \rightarrow \mathcal{R}$ is $\epsilon$-differentially private, with $\epsilon > 0$, if for all $\mathcal{S} \subseteq O$, for all $n$, and all neighboring event logs $L$, $L'$ $\in$ $\mathcal{L}$ with $ n = |L| = |L'|$: $\Pr[{M}(L) \in \mathcal{S}] \leq \exp(\epsilon)\Pr[\mathcal{M}(L') \in {S}]$.
        \end{definition}
    
    \noindent\textbf{The Laplace Mechanism}
        \label{Sec:Laplace}
        $\mathcal{M}_{L,q,\eps}$ \cite{DP} is $\eps$-DP and takes a database $D$ as input and outputs a noised $\Delta_q$-sensitivity bounded query $q$: $\mathcal{M}_{L,q,\eps}(D) \mapsto q(D) + \mathrm{Lap}(0, \frac{\Delta_q}{\epsilon})$. The sensitivity $\Delta_q$ describes how much the output could change in the worst-case if a single data point is exchanged and $\mathrm{Lap}(\Delta_q/\epsilon)$ is defined as
        $\mathrm{Lap}(\Delta_q/\eps)[o] \coloneqq \frac{\eps}{2\Delta_q} \exp(-|o - \mu|\eps/\Delta_q)$. Thus, a smaller $\eps$ or a larger sensitivity $\Delta_q$ corresponds to noise with a larger standard deviation.
        
        \begin{theorem}[The Laplace Mechanism is DP] \label{def:Laplace}
            For a $\Delta_q$-bounded ($\Delta_q \in \mathbb{R}_+$) counting query $q$ and an $\eps > 0$, the Laplace Mechanism $\mathcal{M}_{L,q,\eps}$ is $\eps$-DP.
        \end{theorem}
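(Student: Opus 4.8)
The plan is to reduce the set-wise inequality in Definition~\ref{def:dp} to a pointwise comparison of output densities and then to control that comparison with the triangle inequality and the sensitivity bound on $q$. First I would fix an arbitrary pair of neighboring event logs $L \sim L'$ with $n = |L| = |L'|$ and an arbitrary measurable $\mathcal{S} \subseteq O$. By construction $\mathcal{M}_{L,q,\eps}(L)$ is the real-valued random variable $q(L) + Z$ with $Z \sim \mathrm{Lap}(\Delta_q/\eps)$, so it has the everywhere-positive density $p_L(o) = \frac{\eps}{2\Delta_q}\exp(-|o - q(L)|\,\eps/\Delta_q)$, and analogously $p_{L'}$ for $L'$. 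Positivity (which uses $\Delta_q > 0$) makes the ratio $p_L(o)/p_{L'}(o)$ well defined for every $o$.

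The core step is the density-ratio bound
\[
  \frac{p_L(o)}{p_{L'}(o)} \;=\; \exp\!\Big(\tfrac{\eps}{\Delta_q}\big(|o - q(L')| - |o - q(L)|\big)\Big) \;\le\; \exp\!\Big(\tfrac{\eps}{\Delta_q}\,\big|q(L) - q(L')\big|\Big),
\]
where the inequality is the reverse triangle inequality. Since $q$ is a $\Delta_q$-bounded counting query, exchanging a single trace (which is exactly what $L \sim L'$ with $|L| = |L'|$ permits) changes the count by at most $\Delta_q$, i.e. $|q(L) - q(L')| \le \Delta_q$; hence the ratio is at most $\exp(\eps)$ uniformly in $o$. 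Integrating over $\mathcal{S}$ then gives $\Pr[\mathcal{M}(L) \in \mathcal{S}] = \int_{\mathcal{S}} p_L(o)\,do \le e^{\eps} \int_{\mathcal{S}} p_{L'}(o)\,do = e^{\eps}\Pr[\mathcal{M}(L') \in \mathcal{S}]$, and since $L, L', n, \mathcal{S}$ were arbitrary this is precisely $\eps$-DP.

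For the scalar counting query this is essentially a textbook argument, so the only point that needs genuine care is the sensitivity step: one must argue that the combination of the counting structure of $q$ and the bounded-neighboring relation (equal size, one trace swapped) indeed yields $|q(L) - q(L')| \le \Delta_q$ --- this is where the hypothesis ``$\Delta_q$-bounded'' is consumed. If in later use $q$ is vector-valued, the same proof goes through after replacing the absolute value by the $\ell_1$-norm, adding an independent $\mathrm{Lap}(\Delta_q/\eps)$ to each coordinate, and multiplying the per-coordinate density-ratio bounds; this still yields the factor $e^{\eps}$ because the per-coordinate sensitivities sum to at most $\Delta_q$. I would also note in passing that measurability of $\mathcal{S}$ is what licenses the integral and that the $\frac{\eps}{2\Delta_q}$ prefactor is exactly the Laplace normalizing constant, so no stray constants survive.
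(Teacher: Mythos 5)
Your proof is correct and is exactly the standard argument for this result: the paper does not prove this theorem itself but imports it from the differential-privacy literature, and the density-ratio bound via the reverse triangle inequality plus the sensitivity bound is precisely the canonical proof being cited. No gaps; the remark on the vector-valued extension via $\ell_1$-sensitivity and per-coordinate composition is also accurate.
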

    
    \noindent\textbf{Report Noisy Max (ReNoM) \cite{privacybook}}
        \label{Sec:NoisyMax}
        is an $\eps$-DP algorithm that takes a set of $m$ counting queries and returns the index of the query with the highest noisy count. For that, we add independently sampled noise $\mathrm{Lap}(1/\epsilon)$ to each query (e.g. the frequency of activity pairs in a directly-follows relation).
    
    \noindent\textbf{Rejection Sampling (RejSamp)} \label{Sec:RejectionSampling}
        \begin{algorithm}[t]
        \caption{DP Rejection Sampling \cite[Algorithm 1]{privateCanidateSelection}} 
        \label{algo:baseReject}
            \begin{algorithmic}
                \State {\bfseries Input:} threshold $t$, probability $\gamma \leq 1$, privacy budget $\epsilon_0 \leq 1$, number of steps $T \geq \max{\left(\frac{1}{\gamma}\ln \frac{2}{\epsilon_0}, 1 + \frac{1}{e\gamma}\right)}$, $\epsilon_1$-DP mechanism $M(D)$
                \For{$j = 1,\dots,T$}
                    \State draw $(x, q) \sim M(D)$
                    \If{$q \geq t$}
                        \Return $(x, q)$
                    \EndIf
                    \State flip $\gamma$-biased coin s.t. with probability $\gamma$: \Return $\bot$
                \EndFor
                \State \Return $\bot$
            \end{algorithmic}
        \end{algorithm}
        as in Alg.~\ref{algo:baseReject} is a technique to privately select candidates from a mechanism $M(D)$ that outputs $x$ (e.g. a PST) and a score $q$ (e.g. a noisy fitness). It is $\epsilon$-DP with $\eps = 2 \cdot \epsilon_1 + \epsilon_0$ if $M(D)$ is $\eps_1$-DP \cite[Algorithm 1]{privateCanidateSelection}. With a pre-defined threshold $t$, we accept and return $(x,q)$ if $q \ge t$, otherwise we repeat the process at most $T$ rounds until either the output is accepted or a $\gamma$ biased coin is positive.

\section{Approach}
    \label{Sec:Approach}
    Our novel privacy-preserving process miner \emph{DPIM} produces a process tree (PST) based on an event log. A key ingredient of many process mining algorithms is the directly-follows relation, which describes which pair of activities directly follow each other in the traces. DPIM (cf. Algorithm~\ref{algo:reject}) leverages two key insights: it suffices to build a representative PST (cf. Algorithms~\ref{algo:buildT} and \ref{algo:appendTree}) to solely operate on $(i)$ a directly-follows relation where each activity pair is annotated with its frequency in the event log and $(ii)$ the first and last activities of any loop or parallel cut (cf. Algorithm~\ref{algo:start_end}). In Section~\ref{sec:privacyproof}, we prove that DPIM is differentially private.

    \begin{algorithm}[ht]
    \caption{DPIM: Creates a differentially private PST.} \label{algo:reject}
    \begin{algorithmic}[1]
        \State {\bfseries Input:} EventLog L, fitness threshold $t$, probability $\gamma$, privacy budgets $\epsilon, \epsilon_0$, budget share $(r_1, r_2, r_3)$, lower and upper bound $lb,ub$
        \item[]\vspace{-0.5em} 
            \State $\epsilon_1 \gets 0.5\cdot(\epsilon - \epsilon_0)$ \algorithmiccomment{Budget build PST \& fitness}
            \State $T \geq \max\{\,\frac{1}{\gamma}\ln\frac{2}{\epsilon_0}, 1+\frac{1}{e\gamma}\,\}$ \algorithmiccomment{\#iterations for RejSamp}
            \LeftComment{Initialize DFR dict incl. dummy start \& end activities}
            \State DFR \parbox[t]{6.5cm}{$\gets \{ \{ (a,b)\colon 0 \} \mid (a,b) \in (A \cup \{\textsc{start}\}) \times (A \cup \{ \textsc{end} \})\}$}
            \item[]
            \LeftComment{Count\,activity\,pairs\,in\,DFR:\,in\,how\,many\,traces\,it\,occurs}
            \For{dfr \textbf{in} DFR}
                \For{trace \textbf{in} L}
                    \If{dfr $\in$ trace}
                        dfr.value $\gets \text{dfr.value} + 1$
                    \EndIf
                \EndFor
            \EndFor
            \item[]\vspace{-0.5em}
            \LeftComment{RejSamp until the PST meets the fitness threshold}
            \For{$1, \dots, T$} \algorithmiccomment{\dots for at most $T$ rounds}
                \State n $\gets \mathrm{Uniform}(\textit{lb}, \textit{ub})$
                \LeftComment{Chose $n$ DFR with the highest noisy count (ReNoM)}
                \State DP-DFR $\gets \{\,\,\}$
                \For{$1,\dots, \text{n}$}
                    \State dfr $\gets \text{DFR}[\textsc{ReNoM}(\text{DFR}~\setminus~\text{DP-DFR}, \frac{\eps_1 \cdot r_1}{2\cdot n})]$
                    \State dfr.value $\gets \text{dfr.value} + \mathrm{Laplace}(0, \frac{2\cdot\text{n}}{\epsilon_1 \cdot r_1})$
                    \State DP-DFR $\gets \text{DP-DFR} \cup \{\, \text{dfr} \,\}$
                \EndFor
                \State DP-PST $\gets$ \textsc{buildT}(DP-DFR, L, $\sqrt{8}\cdot\frac{2\cdot\text{n}}{\epsilon_1 \cdot r_1}$, $\epsilon_1 \cdot r_2$, DP-S =$\emptyset$, DP-E=$\emptyset$)\algorithmiccomment{Algorithm~\ref{algo:buildT}}
                \If{ $\text{DP-fit} \geq t$}
                    \Return{(DP-PST, DP-fit)}
                \EndIf
                \State flip a $\gamma$-biased coin s.t. with prob. $\gamma$: \Return{$\bot$}
            \EndFor
            \State \Return{$\bot$}
    \end{algorithmic}
    \end{algorithm}

    \subsection{High-level Description of DPIM}
    \begin{wraptable}{r}{0.51\columnwidth}
        \centering
        \begin{tabular}{l|c|c}
             \toprule
             & \multicolumn{2}{c}{Count}\\
             \cmidrule(l){2-3}
             DFR & Raw & Noisy \\
             \midrule
             (\textsc{start}, R) & 100 & 105.69\\
             (R, H) & 100 & 97.23 \\
             (R, \textsc{end}) & 0 & 5.99 \\
             $\cdots$ & $\cdots$  & $\cdots$\\
             (H, S) & 25 & 22.31 \\
             (S, S) & 0 & 7.64 \\
             (S, D) & 25 & 31.02 \\
             (\textsc{start}, S) & 0 & -2.16\\
             \bottomrule
        \end{tabular}
        \caption{DFR based on trace variants \emph{1} to \emph{3} of Table \ref{tab:probstat}}
        \label{tab: dfr_sample}
    \end{wraptable}
    DPIM (cf. Algorithm \ref{algo:reject}) generates a process tree (PST) and -- for the rejection sampling step -- a fitness score in a differentially private manner. We use rejection sampling (cf. Alg.~\ref{algo:baseReject}) to accept only those PSTs with a high fitness. DPIM works as follows: it annotates in the directly-follows relation \texttt{DFR} in how many traces of the event log $L$ any two activities directly follow each other (Alg. 2, lines 5-7). Note that for privacy reasons, we only use a binary count of whether two activities directly follow. We consider any permutations of activity pairs, including dummy start and end activities (line 4).
    Table \ref{tab: dfr_sample} provides sample DFRs with counts based on Table \ref{tab:probstat}. As this DFR contains sensitive data (counts based on the number of traces), we modify this relation in the next steps to make it differentially private. We perform at most $T$ rounds of rejection sampling, which rejects a generated PST (cf. Algorithm~\ref{algo:buildT}) that does not have a fitness of at least $t$.
    Noise induced by differential privacy increases the variance of the PSTs, which can result in PSTs with low fitness. As the fitness depends on the sensitive event log, we noise the fitness with Laplace noise (cf. Theorem~\ref{def:Laplace}). The scale parameter of the Laplace noise that is added to the fitness, i.e. $\sfrac{1}{|L|\cdot\eps_1\cdot r_3}$, directly correlates with the expected deviation on how much the fitness threshold is missed. We design DPIM in a way that the PST generation in \textsc{buildT} (cf. Algorithm~\ref{algo:buildT}) operates on the annotated directly-follows relation (\texttt{DP-DFR}) that is obtained differentially private. The post-processing theorem of differential privacy \cite{privacybook} states that any operation on this relation does not incur additional privacy leakage. We select the top-$n$ frequency-annotated activity pair in \texttt{DP-DFR} in a differentially private way: in each iteration of rejection sampling, the algorithm first samples the variable $n$ uniformly at random from lower- and upper-bound hyperparameters $\text{lb},\text{ub}$ (lines 8-9). Directly choosing $n$ as the number of activity pairs that have a frequency count larger than $0$ would have privacy leakage.
    Due to rejection sampling, the effect of a suboptimal $n$ selection is limited to how much including non-existent or excluding existing activity pairs affects the fitness. Next, we select the index of the directly-follows relation with the largest noisy count using the differentially private Report Noisy Max mechanism (cf. Section~\ref{Sec:NoisyMax}) and also release the noisy count which we formally noise again. We repeat this \emph{argmax} process $n$ times where we exclude the previously selected indices in each round. As a result, we obtain $n$-many frequency-annotated activity pairs in \texttt{DP-DFR} that occur probably the most in the event log (lines 11-14). For example, when having $n=5$, we select (\textsc{start}, R), (R, H), (H, S), (S, D) and (S, S), which have highest noisy count in Table \ref{tab: dfr_sample}, even though \emph{(S, S)} is not part of the first three trace variants. These selected DFRs are then used to build the PST (line 15) (cf. Algorithm~\ref{algo:buildT}).
    
    \subsection{Detailed Description of the Subalgorithms}    
    \noindent\textbf{buildT (Algorithm~\ref{algo:buildT})}
    \begin{algorithm}[t]
    \caption{BuildT: Builds a PST by detecting cuts in a DFR.}
    \label{algo:buildT}
        \begin{algorithmic}[1]
            \State {\bfseries Input:} DP-DFR, args=\{L, std, $\epsilon_{\text{start\_end}}$, DP-S, DP-E\}
            \If{first recursive round}
                \LeftComment{calculate start \& end activities and event log size DP}
                \State DP-S $\gets \{b \mid a = \textsc{start}\}~\forall \{(a,b): p\} \in \text{DP-DFR}$
                \State DP-E $\gets \{a \mid b = \textsc{end}\}~\forall \{(a,b): p\} \in \text{DP-DFR}$
                \State DP-ESize $\gets \sum_{\{(a,b): p\} \in \text{DP-DFR} } p \cdot 1[a = \textsc{start}]$
                \State DP-ActC $\gets \{a\colon 0 \mid a \neq \textsc{start} \land b \neq \textsc{end}\}~\forall\{(a,b): p\} \in \text{DP-DFR}$
                \For{$\{(a,b): p\}$ \textbf{in} DP-DFR} \algorithmiccomment{activity count}
                    \State DP-ActC[a] $\gets$ DP-ActC[a] + p
                \EndFor
            \EndIf
            \item[]\vspace{-0.5em}
            \State seqSet $\gets$ \textsc{sequence}(DP-DFR) \algorithmiccomment{cf. \cite{InductiveMiner}}
            \If{len(seqSet) $>$ 1} 
                \Return \textsc{appendTree}($\rightarrow$, seqSet, DP-DFR, args, DP-ESize, DP-ActC) \algorithmiccomment{seq. cut, Alg.~\ref{algo:appendTree}}
            \EndIf
            \item[]\vspace{-0.5em}
            \State xorSet $\gets$ \textsc{xor}(DP-DFR) \algorithmiccomment{cf. \cite{InductiveMiner}}
            \If{len(xorSet) $>$ 1} 
                \Return \textsc{appendTree}($\bigotimes$, xorSet, DP-DFR, args, DP-ESize, DP-ActC) \algorithmiccomment{xor cut}
            \EndIf
            \item[]\vspace{-0.5em}
            \State DP-DFR' = DP-DFR \algorithmiccomment{remove loops}
            \For{$\{(a,b): p\}$ and $\{(b,a): p'\}$ \textbf{in} DP-DFR}
                \If{$p + p'$ $\geq$ DP-ESize + std}
                    \State DP-DFR' $\gets$ delete $\{(a,b): p\}, \{(b,a): p'\}$ 
                \EndIf
            \EndFor
            \item[]\vspace{-0.5em}
            \State andSet $\gets$ \textsc{and}(DP-DFR', DP-S, DP-E) \algorithmiccomment{cf. \cite{InductiveMiner}}
            \If{len(andSet) $>$ 1} 
                \Return \textsc{appendTree}($\bigoplus$, andSet, DP-DFR', args, DP-ESize, DP-ActC) \algorithmiccomment{parallel cut}
            \EndIf
            \item[]\vspace{-0.5em}
            \State loopSet, DP-DFR-nL $\gets$ \textsc{loop}(DP-DFR, DP-S, DP-E)
            \If{len(loopSet) $>$ 1}
                \Return \textsc{appendTree}($\circlearrowleft$, loopSet, DP-DFR-nL, args, DP-ESize, DP-ActC) \algorithmiccomment{loop cut}
            \EndIf
            \item[]\vspace{-0.5em}
            \State \parbox[t]{8cm}{\Return \textsc{appendTree}(FLOWER, s, DP-DFR, args, DP-ESize, DP-ActC)}
        \end{algorithmic}
    \end{algorithm}
    \textsc{buildT} analyses the differentially privately obtained DFR (\texttt{DP-DFR}) and returns a process tree (PST) by determining sequential ($\rightarrow$), exclusive or ($\bigotimes$), parallel ($\bigoplus$), and loop ($\circlearrowleft$) cuts in \texttt{DP-DFR}. In this step, we follow the Inductive Miner \cite{InductiveMiner}. 
    \textsc{buildT} works as follows:

    Initially, the algorithm determines the start and end activities (\texttt{DP-S} and \texttt{DP-E}), counts the initial event log size (\texttt{DP-ESize}), and counts activity occurrences at the start of a DFR (\texttt{DP-ActC}) (lines 3-8). These operate on \texttt{DP-DFR} and do not incur additional privacy leakage due to the post-processing theorem.

    \emph{DP-S, DP-E.} The start and end counts are necessary for loop detection and to distinguish parallel from loop cuts (lines 3-4).
    
    \emph{DP-ESize.} We count the event log size to exclude simple loops for an improved parallel cut detection and to determine silent transitions $\tau$. We use the design of \texttt{DP-DFR} and sum up all counts in \texttt{DP-DFR} that contain the dummy \textsc{start} activity (line 5). The original Inductive Miner uses a different technique, which performs privacy-leaking lookups on the sensitive event log. Our technique is differentially private but constitutes an overapproximation, which results in a higher amount of $\tau$ in the resulting PST. The reason is that \texttt{DP-ESize} does not change in a subtree, although the active event log size in this subtree does if, e.g., this subtree is branched by an XOR.
    
    \emph{DP-ActC.} We count how often each activity occurs at the start of a DFR which determines with \texttt{DP-ESize} whether a subtree is optional, i.e. XOR(subtree, $\tau$). We sum over those counts of \texttt{DP-DFR} where each activity is at the beginning of the relation, excluding activity pairs that involve the dummy activities \textsc{start} or \textsc{end} (lines 6-8).
    
    \emph{Approximating IM.} Next, we detect the cuts using the \texttt{DP-DFR} for the \textsc{appendTree} subroutine (cf. Algorithm~\ref{algo:appendTree}). For SEQ and XOR cuts, we use the Inductive Miner (IM) \cite{InductiveMiner} as these operate on a DFR and not on the event log (lines 9-12). For AND and LOOP cuts, we adapt the IM by using only the \texttt{DP-DFR} and the knowledge of the start and end activities within a LOOP and AND candidate, \texttt{DP-S} and \texttt{DP-E}. To simplify the parallel cut detection, the DPIM removes simple loops like LOOP(A,B) or LOOP(AND(A,B,C), $\tau$) over activities A, B, and C a prior (lines 13-16). For instance, the directly-follow relation of LOOP(A,B) is similar to that of the parallel cut AND(A,B): $(A,B)$ and $(B,A)$. However, the frequency count differs: for a loop, $\text{count}(A,B) + \text{count}(B,A) > \text{EventSize}$, whereas for a parallel cut, $\text{count}(A,B) + \text{count}(B,A) = \text{EventSize}$. As the counts and event log size are noisy, we only remove loops that are considerably above the noisy event log size \texttt{DP-ESize} (factor: $\sqrt{8}\cdot \text{LaplaceScale}$) (line 15-16).
    
    \pparagraph{appendTree (Algorithm \ref{algo:appendTree})}
    \begin{algorithm}[t]
    \caption{appendTree: Recursively appends a cut to a PST.}
    \label{algo:appendTree}
    \begin{algorithmic}[1]
        \State {\bfseries Input: }{cutType c, cutSet s, DP-DFR, p=\{L, std, $\eps$, DP-S, DP-E\}, DP-ESize, DP-ActC}
        \item[]\vspace{-0.5em}
            \If{c = LOOP \textbf{or} c = AND}
                $\eps$ $\gets$ $0.5\cdot\eps$
            \EndIf
            \State DP-S, DP-E $\gets$ \textsc{detectS\_E}(c, s, DP-DFR, p)
            \algorithmiccomment{Alg.~\ref{algo:start_end}}
            \If{c = LOOP}
                subtree[$\bot$] $\gets$ XOR($\tau$, $\bot$) \algorithmiccomment{Loops are always optional}
            \EndIf
            \State subtree[$\bot$] $\gets$ c[\dots,$\bot$,\dots]\algorithmiccomment{Add cut to subtree}
            \item[]\vspace{-0.5em}
            \For{acts \textbf{in} s}
                \If{len(act) = 1}
                \algorithmiccomment{create leaf node DP}
                    \If{\{(acts[0], acts[0]): p\} $\in$ DP-DFR}
                        \State subtree[$\bot$] $\gets$ LOOP(acts[0], $\tau$)
                    \ElsIf{DP-ActC[acts[0]] $\leq$ DP-ESize - std}
                        \State subtree[$\bot$] $\gets$ XOR($\tau$, acts[0])
                    \Else{}
                        subtree[$\bot$] $\gets$ acts[0]
                    \EndIf
                \ElsIf{len(act) = 0}
                    \State subtree[$\bot$] $\gets$ $\tau$
                \Else\algorithmiccomment{Alg.~\ref{algo:buildT} to cut remaining activities recusively}
                    \State subtree[$\bot$] $\gets$ \textsc{buildT}(DP-DFR[acts], p)
                    \State \Return subtree
                \EndIf
            \EndFor

            \If{$\sum_{\text{acts} \in s}\sum_{a \in \text{acts}} \text{DP-ActC}[a] \leq$ DP-ESize - std}
                 \State subtree[$\bot$] $\gets$ $\tau$
            \EndIf
            \State \Return subtree
    \end{algorithmic}
    \end{algorithm}
    \textsc{appendTree} performs the cut of Algorithm~\ref{algo:buildT} and either returns a leaf in a PST if the cut separates a single activity (line 20) or a recursively built subtree on the remaining activities in a cut (line 16) (cf. \textsc{buildT} of Algorithm~\ref{algo:buildT}).

    \textsc{detectS\_E} spends some privacy budget $\eps$ to recalculate the start and end activities within a loop or parallel cut for a subsequent recursive round (lines 2-3). A priori the count of parallel or loop cuts is unknown. Therefore, we spend the privacy budget using a geometric series, where we first spend $0.5\eps$, then $0.25\eps$, etc (line 2).
    
    The algorithm builds the PST recursively where we append cuts and activities to an empty leaf $\bot$ with the notation ``subtree[$\bot$] $\gets$''. In particular, ``subtree[$\bot$] $\gets$ c[\dots,$\bot$,\dots]'' denotes that we add the cut c with as many empty leaves as needed (line 5), and ``subtree[$\bot$] $\gets$ c[acts[0]]'' marks a leaf with one activity where no subsequent cuts can be appended (i.e, line 12). We overapproximate loops by making them optional (line 4).

    After appending the cut, the algorithm appends for each element in the cut set either a leaf or a subtree (lines 6-17). For instance, for input \texttt{cutSet=[[R], [H], [S,D]]} and \texttt{cutType=SEQ} on activities R, H, S, and D, we append a sequence cut with two leaves with the R and H and a subtree over S and D. For appending each subtree we call Algorithm~\ref{algo:buildT} recursively on the directly-follows relation \texttt{DP-DFR[acts]} relevant for the subtree (line 16). Note that the algorithm also handles a few special cases: first, a self-loop where we loop over one activity or $\tau$ (line 9); second, an optional activity represented by XOR($\tau$, activity) which only happens if this activity occurs considerably (factor: $\sqrt{8}\cdot \text{LaplaceScale}$) less then \texttt{DP-ESize} (line 11); and third if all leaves of this cut are optional as each activity within it occurs considerably less than \texttt{DP-ESize} (lines 18-19). The algorithm also overapproximates $\tau$-transitions in the second and third special cases, as it considers the overall event log size \texttt{DP-ESize} and not the active number of traces within this cut.

    \pparagraph{detectS\_E (Algorithm \ref{algo:start_end})}
    \begin{algorithm}[t]
    \caption{DetectS\_E: Detects the next start\,\&\,end activities.}
    \label{algo:start_end}
    \begin{algorithmic}[1]
        \State {\bfseries Input:}cutType c, cutSet s, DP-DFR, p=\{DP-S, DP-E, L, \_, $\eps$\}
            \If{c = $\circlearrowleft$ \textbf{or} c = $\bigoplus$}
                \For{trace \textbf{in} L} \algorithmiccomment{only keep event log of subtree}
                    \State trace $\gets$ del. all activities in trace that are not in s
                \EndFor

                \State cStart $\gets \{\, \{\,a: 0\,\} ~\forall a \in \text{acts}, \text{acts} \in s \,\}$ 
                \State cEnd $\gets \{\, \{\,a: 0\,\} ~\forall a \in \text{acts}, \text{acts} \in s \,\}$
                \item[]\vspace{-0.5em}
                \LeftComment{count for each activity how often it starts and ends}
                \For{trace \textbf{in} L}
                    \State cStart[trace.firstAct] $\gets$ cStart[trace.firstAct] + 1
                    \State cEnd[trace.lastAct] $\gets$ cEnd[trace.lastAct]  + 1
                \EndFor
                
                \For{\{\,a : cnt\,\} \textbf{in} cStart \textbf{and} \{\,a : cnt\,\} \textbf{in} cEnd}
                    \State cnt $\gets$ cnt + $\mathrm{Laplace}(0, \frac{4}{\epsilon})$ \algorithmiccomment{noise this count}
                \EndFor
                \State DP-S $\gets \{\, \{\,a: \text{cnt}\,\} \in \text{cStart} \mid \text{cnt} \ge \sqrt{8}\cdot\frac{4}{\eps} \,\}$\algorithmiccomment{keep significant count}
                \State DP-E $\gets \{\, \{\,a: \text{cnt}\,\} \in \text{cEnd} \mid \text{cnt} \ge \sqrt{8}\cdot\frac{4}{\eps} \,\}$
                \State \Return DP-S, DP-E
            \item[]\vspace{-0.5em}
            \ElsIf{c = $\bigotimes$ or c = $\rightarrow$}
                \For{acts \textbf{in} s}
                    \If{len(acts) = 1}
                        \State DP-S $\gets$ replace acts[0] with DP-DFR-succ(acts[0])
                        \State DP-E $\gets$ replace acts[0] with DP-DFR-pred(acts[0])
                    \EndIf
                \EndFor
                \State \Return DP-S, DP-E
            \EndIf
            
    \end{algorithmic}
    \end{algorithm}
    \textsc{detectS\_E} determines the start and end activities within the current cut. For a sequence ($\rightarrow$) or exclusive or ($\bigotimes$) cut, we determine these by looking at the predecessor or successor of the start or end activities of the previous cut in \texttt{DP-DFR} (lines 15-20). This step is DP by the post-processing theorem \cite{privacybook}. For a parallel ($\bigoplus$) or loop cut ($\circlearrowleft$), we work on the event log where we spend a privacy budget roughly proportional to the number of ANDs and LOOPs (lines 2-14). This is due to the nature of ANDs and LOOPs where a successor or predecessor is not necessarily part of the next cut.
    
    \emph{AND or LOOP case}. First, we identify the correct start and end activities by deleting all activities in the event log $L$ that are not part of the current cutSet (lines 3-4), e.g. assuming (S, S) and (S, D), the traces would be: $\langle\,D\,\rangle$ and $\langle\,S,\,D\,\rangle$, based on trace variants \emph{1} to \emph{3}. Thus, we only have to count in \texttt{cStart} and \texttt{cEnd} the first and last activity of each modified trace (lines 7-9). Second, we noise these counts using the Laplace Mechanism (lines 10-11) (cf. Section~\ref{sec:privacyproof}). Each trace has a single start and end activity, so it can only influence one count per start and end. Thus, we apply parallel composition to prevent scaling the noise with the number of candidates. Third, we select those activities from the candidate sets \texttt{cStart} and \texttt{cEnd} as start and end activities \texttt{DP-S} and \texttt{DP-E}, which have a count significantly greater than zero (lines 12-13). As a significance level for this particular case, we use the doubled standard deviation $2\sigma = \sqrt{8}\cdot b$ of the added Laplace noise $\mathrm{Laplace}(0,b)$, which corresponds to a $97\,\%$ significance level of sampling below the standard deviation.

    \subsection{Privacy Guarantee}\label{sec:privacyproof}
    \begin{theorem} \label{dp_proof}
        The DP-Inductive-Miner in Algorithm~\ref{algo:reject} is $\epsilon$-differentially private.
    \end{theorem}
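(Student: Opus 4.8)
\quad The plan is to split Algorithm~\ref{algo:reject} into an inner one-shot mechanism and a rejection-sampling shell, and to discharge the shell with the imported guarantee for Alg.~\ref{algo:baseReject}. Concretely, let $M$ be one execution of the body of the loop on lines~8--16 with the $\gamma$-coin deleted: from the log $L$ it draws $n$ uniformly from the interval $[\mathit{lb},\mathit{ub}]$, builds the noisy directly-follows relation \texttt{DP-DFR} by iterated \textsc{ReNoM} plus Laplace count releases (lines~11--14), runs \textsc{buildT} to obtain \texttt{DP-PST} (line~15), and attaches the noisy fitness \texttt{DP-fit}. Algorithm~\ref{algo:reject} is exactly Alg.~\ref{algo:baseReject} instantiated with this $M$, threshold $t$, probability $\gamma$, budget $\epsilon_0$, and the step count $T$ fixed on line~3; so by the cited rejection-sampling result it is $(2\epsilon_1+\epsilon_0)$-DP once $M$ is shown $\epsilon_1$-DP, and line~2's choice $\epsilon_1=\tfrac12(\epsilon-\epsilon_0)$ makes $2\epsilon_1+\epsilon_0=\epsilon$. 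The whole proof therefore reduces to showing \emph{$M$ is $\epsilon_1$-DP}, which I would obtain by adaptive sequential composition of three stages consuming $r_1\epsilon_1$, $r_2\epsilon_1$, and $r_3\epsilon_1$, which sum to $\epsilon_1$ since the input shares satisfy $r_1+r_2+r_3=1$.

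\emph{Stage~1 (lines~9--14).} The raw relation \texttt{DFR} is internal and never enters the output; this stage releases the $n$ selected activity pairs together with their re-noised counts. The draw of $n$ is data-independent, hence free. Each of the $n$ inner rounds runs one \textsc{ReNoM} over the counting queries ``number of traces that contain the pair $(a,b)$'', each of $\ell_\infty$-sensitivity $1$ under a one-trace change, with budget $\tfrac{r_1\epsilon_1}{2n}$, followed by a Laplace release of the chosen count at scale $\tfrac{2n}{r_1\epsilon_1}$, i.e.\ budget $\tfrac{r_1\epsilon_1}{2n}$ by Theorem~\ref{def:Laplace}. Since already-selected indices are removed adaptively, basic composition over the $n$ rounds gives $r_1\epsilon_1$-DP.

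\emph{Stages~2 and 3 (line~15 and the noisy fitness).} The key point is that \textsc{buildT} and \textsc{appendTree} touch $L$ only inside \textsc{detectS\_E} on a parallel or loop cut: the cut tests for sequence, XOR, AND and LOOP, the quantities \texttt{DP-S}, \texttt{DP-E}, \texttt{DP-ESize}, \texttt{DP-ActC}, the simple-loop removal, the sequence/XOR branch of \textsc{detectS\_E}, and every leaf or $\tau$ decision all read only the already-private \texttt{DP-DFR}, hence are free by the post-processing theorem. In the parallel/loop branch (Alg.~\ref{algo:start_end}, lines~2--14) each trace is projected onto the cut's activities and $\mathrm{Laplace}(0,4/\epsilon')$ is added to every cell of \texttt{cStart} and \texttt{cEnd}; a one-trace change alters the stacked start/end histogram by at most $4$ in $\ell_1$ (one first and one last activity each), so this release is $\epsilon'$-DP and the subsequent threshold step producing \texttt{DP-S}, \texttt{DP-E} is post-processing. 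Along any root-to-leaf path the budget passed down is halved at every AND/LOOP node (Alg.~\ref{algo:appendTree}, line~2), so the $j$-th such node on a path runs with $r_2\epsilon_1\,2^{-j}$ and the path total stays below $r_2\epsilon_1$. Stage~3 is the noisy fitness: conditioned on \texttt{DP-PST}, the output \texttt{DP-fit} equals the fitness of \texttt{DP-PST} on $L$ perturbed by $\mathrm{Laplace}\!\left(0,\tfrac{1}{|L|\,r_3\epsilon_1}\right)$, and with the fitness normalized so that one trace moves it by at most $1/|L|$ this stage is $r_3\epsilon_1$-DP. Adaptive composition of the three stages yields ``$M$ is $\epsilon_1$-DP''.

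\emph{Main obstacle.} The delicate part is making Stage~2's accounting airtight. The geometric halving controls the cumulative \textsc{detectS\_E} budget \emph{along one branch}, but a parallel/loop node hands the \emph{same} halved budget to each of its children, and those children's \textsc{detectS\_E} calls read projections of the \emph{same} traces onto disjoint activity sets, so a one-trace change can perturb every sibling's start/end histograms at once: sibling calls compose sequentially, not in parallel, and a naive sum over all AND/LOOP nodes need not be bounded by $r_2\epsilon_1$ once the tree branches. A rigorous proof must therefore either bound the AND/LOOP branching of the produced tree or revise the schedule so that a node splits its remaining budget among its children. A secondary item is that Definition~\ref{def:dp} uses the bounded (swap) neighbouring relation, so the sensitivities feeding \textsc{ReNoM} and the Laplace mechanisms --- in particular the claimed $\ell_\infty$-sensitivity $1$ of the binary directly-follows counts and the monotonicity usually invoked for \textsc{ReNoM} --- must be re-verified for swaps rather than for single additions or removals. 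Everything else is the bookkeeping $r_1+r_2+r_3=1$ and three appeals to the mechanisms recalled in Section~\ref{Sec:Preliminaries}.
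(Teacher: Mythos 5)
Your proposal follows essentially the same route as the paper's own proof: reduce the outer loop to the imported rejection-sampling guarantee with $\eps = 2\eps_1+\eps_0$, split $\eps_1$ into the shares $r_1\eps_1$ (iterated \textsc{ReNoM} plus Laplace count releases, composed over the $n$ rounds), $r_2\eps_1$ (the start/end counts of Algorithm~\ref{algo:start_end}, paid for by the geometric halving at AND/LOOP nodes), and $r_3\eps_1$ (the $\sfrac{1}{|L|}$-sensitive fitness), and discharge every other step as post-processing of \texttt{DP-DFR}. Your $\ell_1$-sensitivity-$4$ argument for the stacked start/end histogram is a cleaner packaging of the paper's ``parallel composition within \texttt{cStart}/\texttt{cEnd}, sequential across them'' step, but it yields the same scale $4/\eps$. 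The substantive point is the obstacle you flag in Stage~2: the paper's proof does not resolve it either --- it simply asserts the geometric series $0.5\eps, 0.25\eps,\dots$ without arguing that the AND/LOOP nodes form a chain, and in a branching tree two sibling AND/LOOP cuts each inherit the same halved budget while reading projections of the \emph{same} traces, so sibling calls compose sequentially and the total can exceed $r_2\eps_1$ exactly as you describe. (The only textual escape is that Algorithm~\ref{algo:appendTree} as written returns immediately after its first recursive \textsc{buildT} call, which would force the recursion to be a single path and rescue the series; whether that is the intended semantics is doubtful, since it would drop later children of the cut.) Your secondary remark about re-checking sensitivities under the bounded (swap) neighbouring relation is likewise a detail the paper passes over silently. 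In short, your reconstruction is faithful to the paper's argument, and the gap you identify is a gap in the paper's proof rather than an artifact of your version.
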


    \begin{proof}
        By the post-processing theorem of differential privacy, it suffices that each operation that works on the sensitive event log is differentially private. These are the start and end count in Algorithm~\ref{algo:start_end} and the DP-DFR creation, fitness calculation, and rejection sampling in Algorithm~\ref{algo:reject}.

        In Algorithm~\ref{algo:start_end}, we count based on the event log in \texttt{cStart} and \texttt{cEnd} for each activity how often it is at the start and the beginning of each trace. Each activity in \texttt{cStart} and \texttt{cEnd} has a sensitivity of $1$, i.e. exchanging one trace increases each count by at most $1$. By the Laplace Mechanism (Definition \ref{def:Laplace}), adding Laplace noise to each $1$-sensitivity-bounded count with a scale of $\frac{4}{\eps_{\text{start\_end}}} = \frac{1}{0.25\eps_{\text{start\_end}}}$ is $0.25\eps_{\text{start\_end}}$-DP. As only two counts in either \texttt{cStart} and \texttt{cEnd} can increase by an exchanged trace, we can apply the parallel composition theorem \cite{privacybook} which means that noising all counts in either \texttt{cStart} and \texttt{cEnd} is $0.5\eps_{\text{start\_end}}$-DP. As we noise both \texttt{cStart} and \texttt{cEnd}, we apply the sequential composition theorem such that this process is $\eps_{\text{start\_end}}$-DP since $\eps_{\text{start\_end}} = 0.5\eps_{\text{start\_end}} + 0.5\eps_{\text{start\_end}}$.

        Algorithm \ref{algo:reject} utilizes the rejection sampler~\cite{privateCanidateSelection} (cf. Section~\ref{Sec:RejectionSampling}) which is $\eps$-DP with $\eps = 2\epsilon_1 + \epsilon_0$. Since we chose the number of activity pairs $n$ uniformly at random, the mechanism is not dependent on the rejection iteration. Thus, it remains to show that the fitness calculation is $\eps_1$-DP with $\eps_1 = r_1 \eps_1 + r_2 \eps_1 + r_3 \eps_1$ for some positive shares $r_1, r_2, r_3$ s.t. $r_1 + r_2 + r_3 = 1$. We notate $r_2 \eps_1 = \eps_{\text{start\_end}}$.

        Selecting the activity pair of \texttt{DFR} with the highest count using the Report Noisy Max is $0.5 r_1 \eps_1$-DP. As this process is repeated $n$ times, we apply sequential composition and rescale the privacy budget of the Report Noisy Max by $\sfrac{1}{n}$. Counting the frequency of \texttt{dfr} is $0.5 r_1 \eps_1$-DP with a similar argumentation as for Algorithm~\ref{algo:start_end}: The counting process of each count is $1$-sensitivity bounded, but here we apply sequential composition as each trace can alter all activity pairs. Thus, we have to scale the Laplace noise for each time we noise, i.e., by $n$. Counting the fitness works analogously and is $r_3 \eps_1$-DP, but here we have a sensitivity of $\sfrac{1}{|L|}$. Note that we assume that the event log size $L$ is public knowledge, which does not change if we exchange a trace in the event log. Thus, by the post-processing theorem, the fitness score calculation is $\eps_1$-DP, which concludes that creating a PST in Algorithm~\ref{algo:reject} is $\epsilon$-differentially private.
    \end{proof}

\section{Evaluation}
    \label{Sec:Evaluation}
    This section summarizes the evaluation results in terms of process model accuracy and trade-off between accuracy and privacy. We applied DPIM on 14 event logs from the BPI Challenges. Table \ref{tab:event_log_statistics} shows for each evaluated event log the amount of traces, trace variants, events, and unique activities.

    \begin{table}[h!]
        \centering
        \begin{tabular}{l|c|c|c|c}
            \toprule
            {BPI Challenge Event Log} & {Traces} & {Variants} & {Events} & {Activities} \\
            \midrule
            Closed Problems & 1,487 & 327 & 6,660 & 7 \\
            Domestic Declarations & 10,500 & 99 & 53,437 & 17 \\
            Incidents & 7,554 & 2,278 & 65,533 & 13 \\
            International Declarations & 6,449 & 753 & 72,151 & 34 \\
            Open Problems & 819 & 182 & 2,531 & 5 \\
            Prepaid Travel Costs & 2,099 & 202 & 18,246 & 29 \\
            Request for Payment & 6,886 & 89 & 36,796 & 19 \\
            Sepsis & 1,050 & 846 & 15,214 & 16 \\
            \bottomrule
        \end{tabular}
        \caption{Evaluation Event Log Statistics}
        \label{tab:event_log_statistics}
    \end{table}
    \begin{figure}[t]
        \vspace{-1em}
        \centering
        \includegraphics[width=0.475\textwidth]{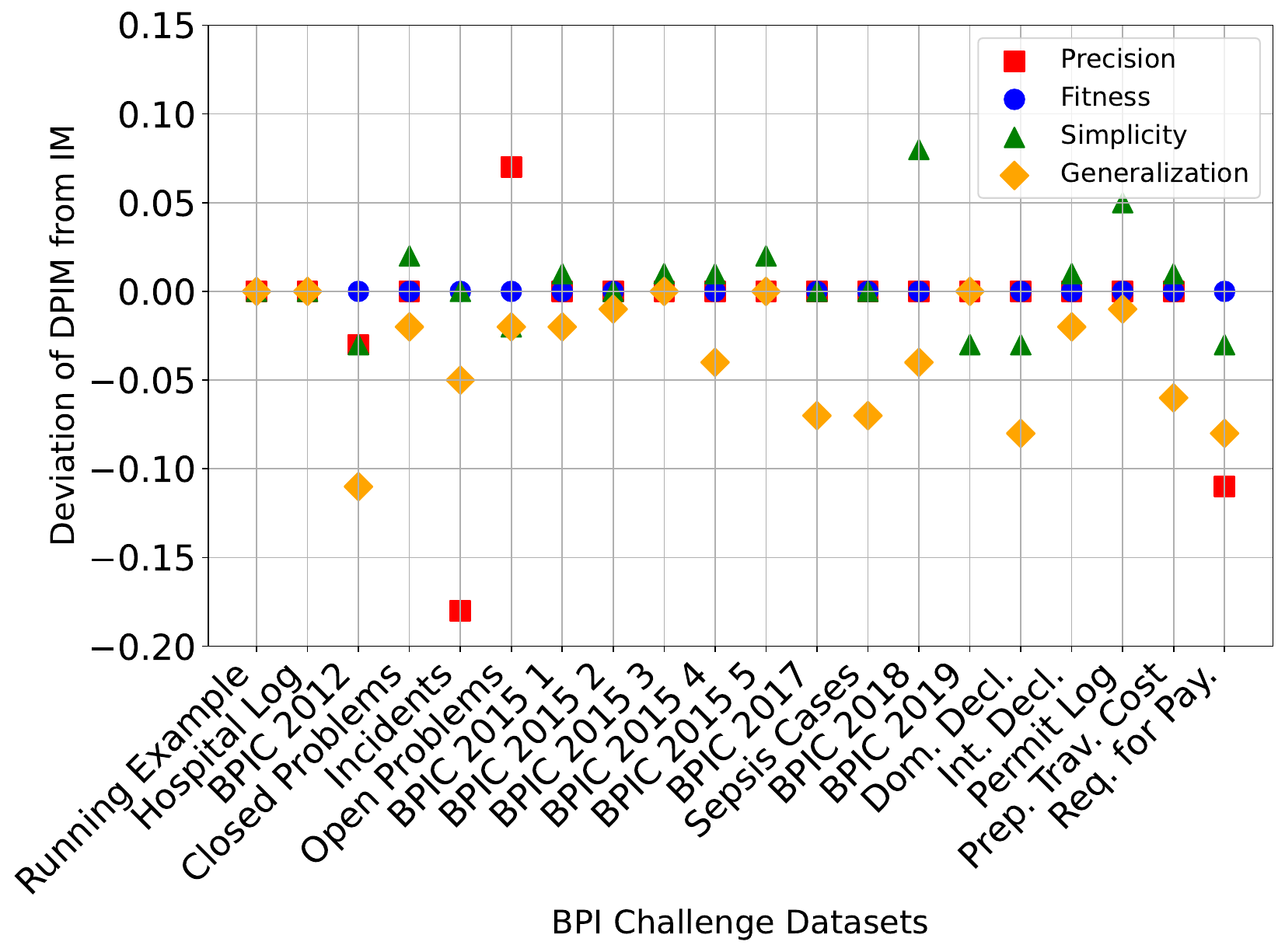}
        \caption{IM vs. the non-DP DPIM: Deviation of metrics}
        \label{fig:deviation_im_dpim}
        \vspace{-1em}
    \end{figure}
    \begin{figure*}[t]
        \centering
        \begin{minipage}[t]{0.215\textwidth}
        \centering
            \includegraphics[width=\textwidth]{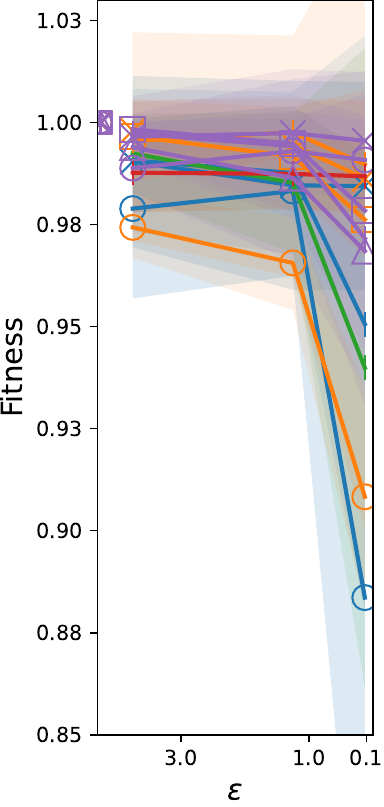}
        \end{minipage}
        \begin{minipage}[t]{0.215\textwidth}
        \centering
            \includegraphics[width=1\textwidth]{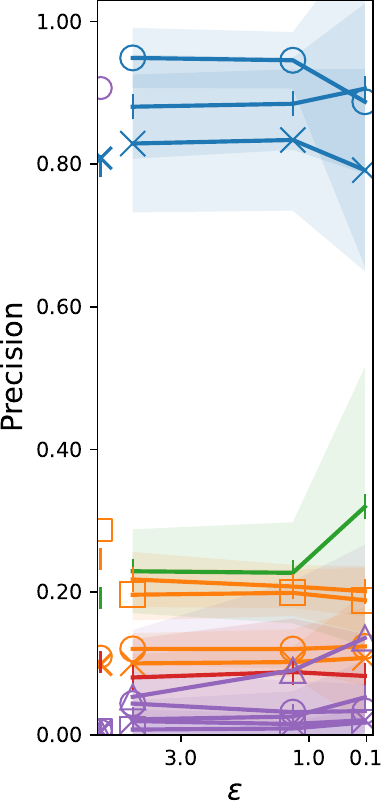}
        \end{minipage}
        \begin{minipage}[t]{0.215\textwidth}
        \centering
            \includegraphics[width=1\textwidth]{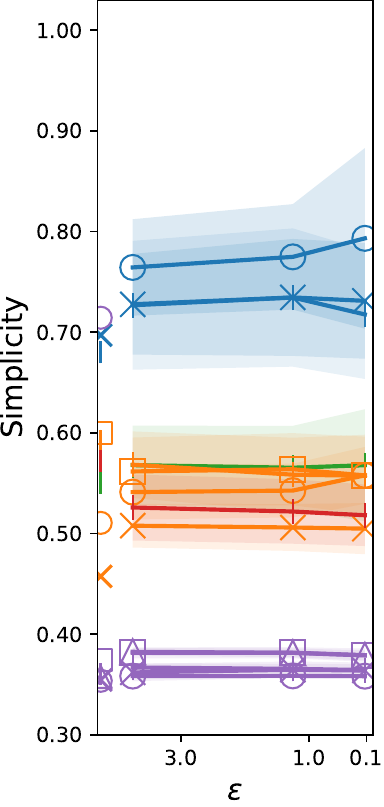}
        \end{minipage}
        \begin{minipage}[t]{0.215\textwidth}
        \centering
            \includegraphics[width=1\textwidth]{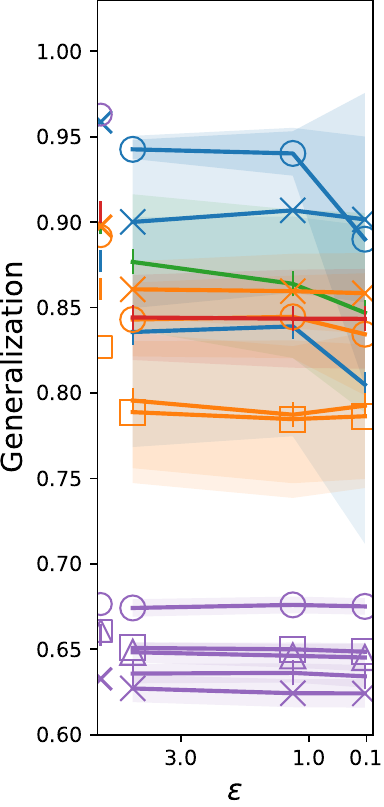}
        \end{minipage}\\
        \begin{minipage}{\textwidth}
        \centering
            \includegraphics[width=1\textwidth]{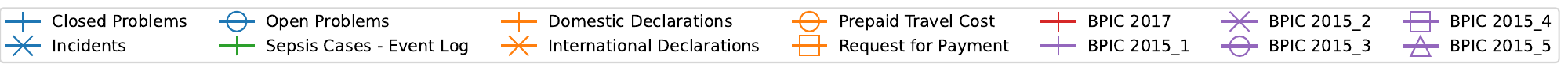}
        \end{minipage}
        \caption{Fitness, precision, simplicity, and generalization (higher is better) of DPIM for 8 benchmark event logs and 3 privacy parameters (lower $\varepsilon$ means stronger privacy). The dots at the Y-axis indicate the respective performance of the IM.}
        \label{fig:eval_generalization}
        \label{fig:eval_simplicity}
        \label{fig:eval_fitness}
        \label{fig:eval_precision}
    \end{figure*}

    \pparagraph{Evaluation Setup}
    We used PM4Py \cite{pm4py} to compute fitness, precision, simplicity, and generalization. As mentioned in Section \ref{Sec:Approach}, we modify some parts of PM4Pys cut detection to allow a privacy-preserving generation of a PST. As hyperparameters of Algorithm \ref{algo:reject}, we have chosen $r_1 = 0.65$, $r_2=0.25$, $r_3=0.1$, $\epsilon_0 = 0.01$, $\gamma = 0.01$, and $t = 0.95$.
    The lower and upper bound hyperparameters vary per event log. In our evaluation, we determined both bounds using the number of activity pairs in the directly-follows relation with a frequency count above 0. We subtracted $15$ from this frequency count for the lower bound and added $15$ for the upper bound. Both values were then rounded to the next by $5$ divisible number. Moreover, the lower bound is never smaller than the number of activities in the event log $L$, and the upper bound is always smaller than the square of the number of activities.

    \pparagraph{Evaluating Correctness}
    \label{Sec:PM_wo_DP}
    We ran DPIM in a non-DP mode (simulated by $\eps = 100{,}000$) on all event logs and compared the fitness, precision, simplicity, and generalization of the PSTs with the results of the Inductive Miner (IM). We found that DPIM approximates IM closely; nearly all metrics of the DPIM are within $\pm$ 0.10 of the respective IM values (Fig.~\ref{fig:deviation_im_dpim}). For instance on the 2013 Incidents event log, Figs. \ref{fig:sepsis_im} and \ref{fig:sepsis_nondp} show that process trees discovered by IM and DPIM are close.

    \begin{figure}[!ht]
        \centering
        \begin{subfigure}[b]{0.38\columnwidth}
            \centering
            \includegraphics[width=1\columnwidth]{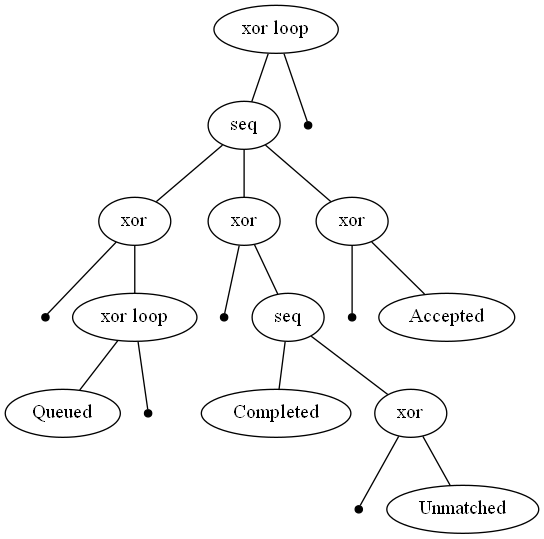}
            \caption{IM}
            \label{fig:sepsis_im}
        \end{subfigure}
        \begin{subfigure}[b]{0.38\columnwidth}
            \centering
            \includegraphics[width=1\columnwidth]{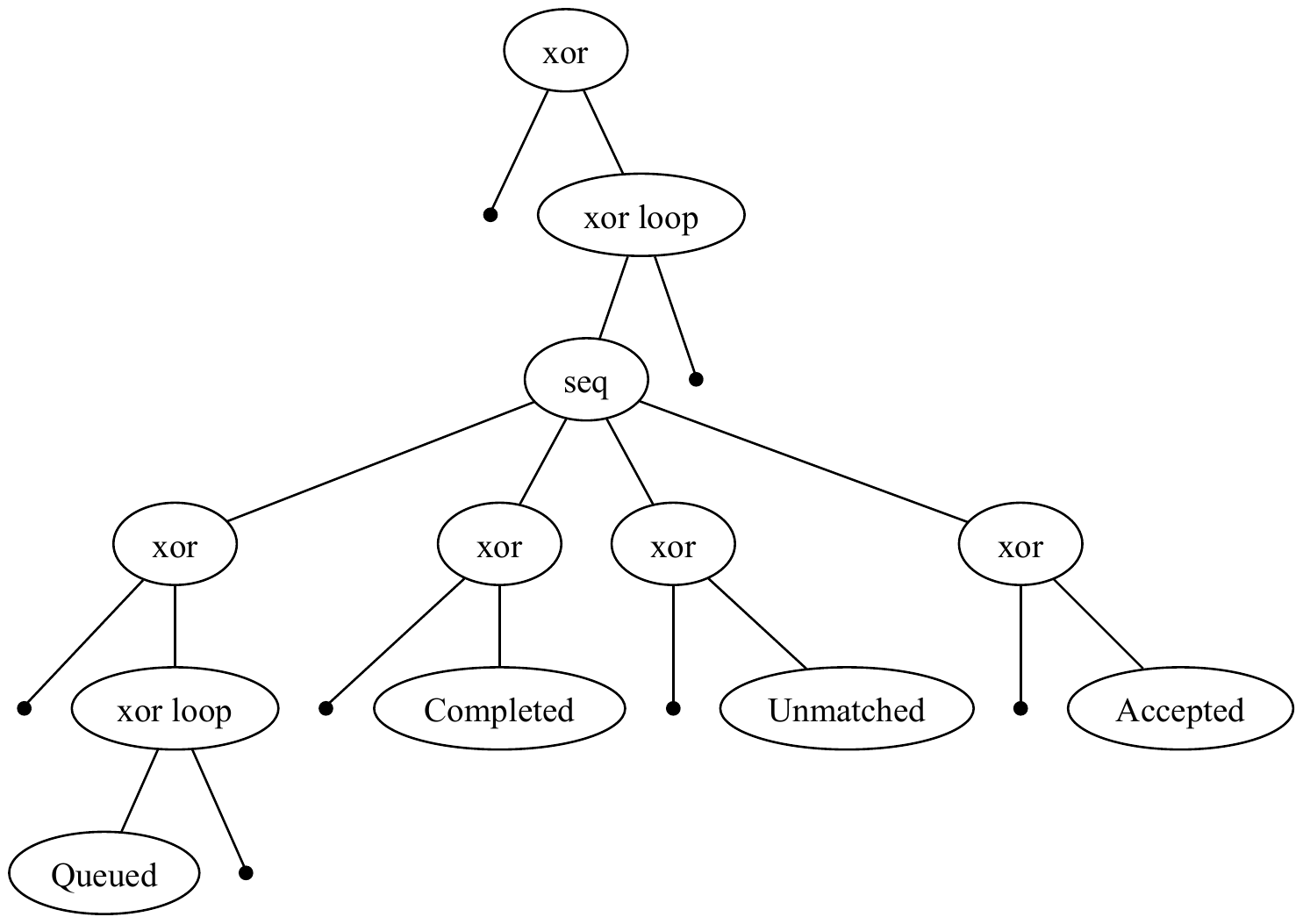}
            \caption{$\varepsilon$ = 100,000}
            \label{fig:sepsis_nondp}
        \end{subfigure}
        \begin{subfigure}[b]{0.21\columnwidth}
            \centering
            \includegraphics[width=1\columnwidth]{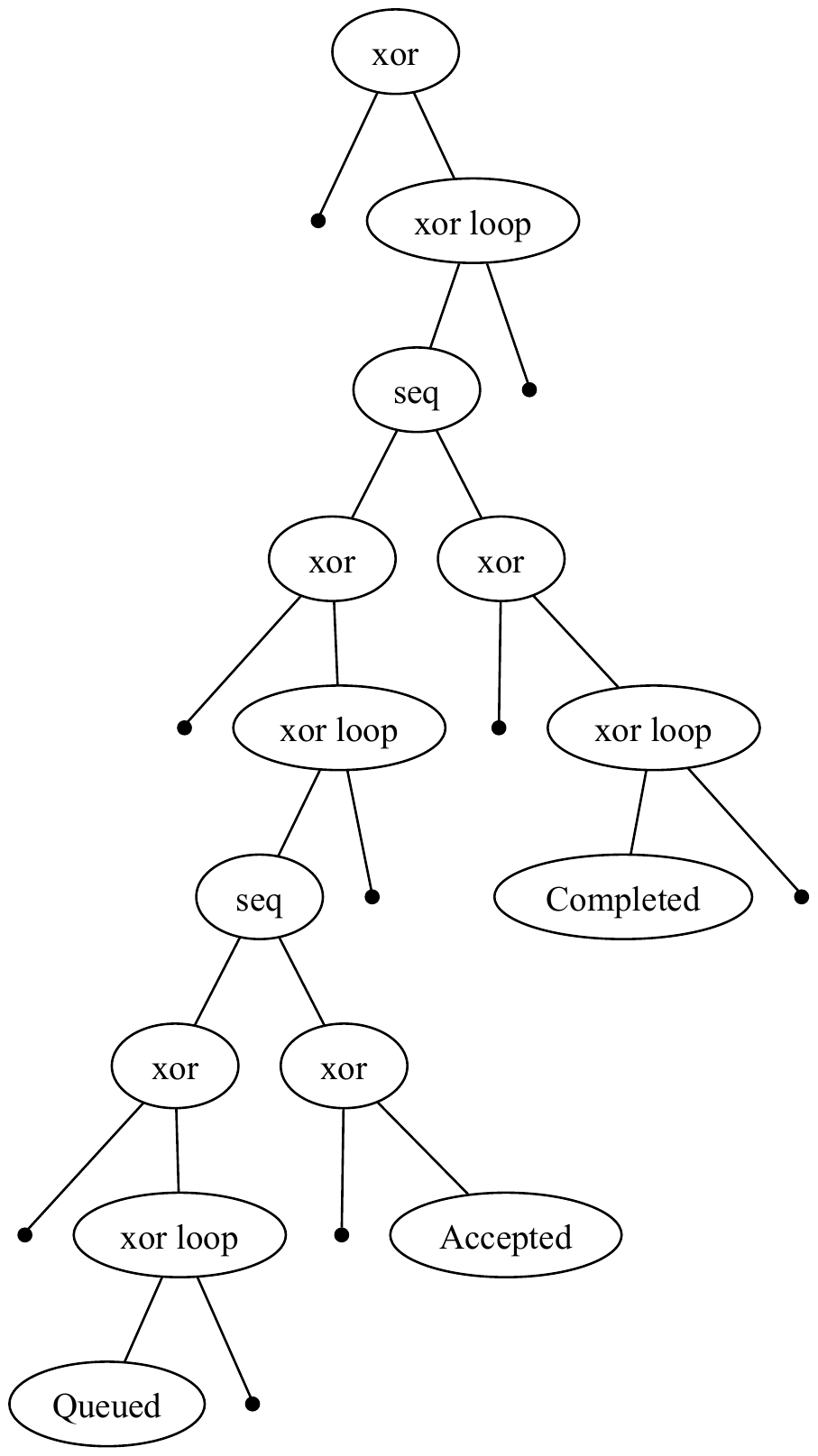}
            \caption{$\varepsilon$ = 0.125}
            \label{fig:sepsis_dpim}
        \end{subfigure}
        \caption{Comparing IM (a) and  DPIM (b+c) on \emph{Incidents} data.}
        \label{fig:deviation}
    \end{figure}

    \pparagraph{Evaluating Privacy Gain}
    \label{Sec:PM_w_DP}
    The strength of the DP guarantee is expressed with $\epsilon$ where a lower value means better privacy. By design, a higher $\epsilon$ introduces more noise, implicating how drastically a PST can be altered. Comparing Figs. \ref{fig:sepsis_nondp} and \ref{fig:sepsis_dpim} shows that the discovered PST changes significantly, hiding the exact nature of the underlying process. Thereby, the influence of single traces on the process model has been obscured, showing that the introduction of DP, as proven in Theorem~\ref{dp_proof}, alters the PST with limited utility loss (cf. Fig. \ref{fig:eval_generalization}).
    
    \pparagraph{Evaluating Utility Loss}
    \label{Sec:Tradeoff}
    Fig.~\ref{fig:eval_fitness} illustrates the fitness, precision, simplicity, and generalization of the 14 event logs of the IM and DPIM. The IM values are denoted as dots on the Y-axis. The privacy-preservation DPIM results are shown with varying $\epsilon$ values (3.75, 1.25, and 0.125). For \textbf{fitness}, most process models achieve a value of 0.95, where only the process model of the \textit{Open Problems} event log falls below 0.9. This robustness of fitness results from the rejection sampling method, as the DPIM is designed to preserve a high fitness to still be representative of the event log. For \textbf{precision}, the introduction of differential privacy increases the metric value for 10 out of 14 event logs. As the $\epsilon$ decreases, we see two effects. While some process models slightly decline in precision (\textit{Open Problems}) as the noised process model allows for more behavior not seen in the event log, some also increase (\textit{Closed Problems}). The latter can be explained by noise removing choices from the original process, thus allowing for fewer unseen process behaviors. \textbf{Simplicity} initially increases across all event logs compared to the IM. With a further increase in privacy, simplicity also further increases (\textit{Open Problems}, \textit{Prepaid Travel Cost}) while for most process models simplicity remains stable. The increase in simplicity comes from noise removing subtrees, leading to shorter and simpler process trees. \textbf{Generalization} declines across all event logs when introducing privacy, and mostly declines further when strengthening privacy.

\section{Conclusion}
    \label{Sec:Conclusion}
    This paper presented the differentially private discovery algorithm based on the Inductive Miner (IM) called \emph{DPIM}. We have proven $\eps$-DP of \emph{DPIM} and evaluated it on 14 real-world event logs. A comparison between the process models discovered by DPIM and the original IM shows that strong privacy can be given with a slight loss of utility. The choice of the budget $\epsilon$ quantifies the privacy guarantee. The loss of utility is quantified via the difference in quality metrics such as fitness, precision, simplicity, and generalization between the results of IM and DPIM.

\bibliographystyle{plain}
\bibliography{bibliography}

\end{document}